\title{\LARGE \bf
Leveraging Classification Metrics for Quantitative System-Level Analysis with Temporal Logic Specifications
}
\author{Apurva Badithela, Tichakorn Wongpiromsarn and Richard M. Murray
\thanks{A. Badithela is a graduate student in Control and Dynamical Systems, Computing and Mathematical Sciences,
        California Institute of Technology, Pasadena, CA 91106, USA
        {\tt\small apurva@caltech.edu}}%
\thanks{T. Wongpiromsarn is with the Department of Computer Science, Iowa State University,
        Ames, IA 50011, USA
        {\tt\small nok@iastate.edu}}%
\thanks{R.M. Murray is with the Control and Dynamical Systems, Computing and Mathematical Sciences,
        California Institute of Technology, Pasadena, CA 91106, USA
        {\tt\small murray@cds.caltech.edu}}%
}
\newcolumntype{Z}{>{\raggedright}X}
\newtheorem{problem}{Problem}
\newtheorem{lemma}{Lemma}
\newtheorem{proposition}{Proposition}
\newtheorem{remark}{Remark}
\newtheorem{ex}{Example}
\newtheorem{defin}{Definition}
\newcommand{\mM}{\mathcal{M}}
\newcommand{\mC}{\mathcal{C}}
\newcommand{\mS}{\mathcal{S}}
\DeclareRobustCommand\bigop[1]{%
  \mathop{\vphantom{\sum}\mathpalette\bigop@{#1}}\slimits@
}
\newcommand{\bigop@}[2]{%
  \vcenter{%
    \sbox\z@{$#1\sum$}%
    \hbox{\resizebox{\ifx#1\displaystyle.9\fi\dimexpr\ht\z@+\dp\z@}{!}{$\m@th#2$}}%
  }%
}
\begin{document}

\maketitle
\thispagestyle{empty}
\pagestyle{empty}



\begin{abstract}
In many autonomy applications, performance of perception algorithms is important for effective planning and control. In this paper, we introduce a framework for computing the probability of satisfaction of formal system specifications given a confusion matrix, a statistical average performance measure for multi-class classification. We define the probability of satisfaction of a linear temporal logic formula given a specific initial state of the agent and true state of the environment. 
Then, we present an algorithm to construct a Markov chain that represents the system behavior under the composition of the perception and control components such that the probability of the temporal logic formula computed over the Markov chain is consistent with the probability that the temporal logic formula is satisfied by our system.
We illustrate this approach on a simple example of a car with pedestrian on the sidewalk environment, and compute the probability of satisfaction of safety requirements for varying parameters of the vehicle. We also illustrate how satisfaction probability changes with varied precision and recall derived from the confusion matrix. Based on our results, we identify several opportunities for future work in developing quantitative system-level analysis that incorporates perception models.
\end{abstract}
\section{Introduction}
\label{sec:introduction}
Autonomous systems usually consist of interconnected components, including perception and control, as shown in Figure~\ref{fig:arch}.
The perception component observes the  environment, classifies the objects and relevant features, and creates a representation of the world around the vehicle.
The control component uses this information to compute a trajectory for the vehicle to follow and the corresponding actuation commands to keep the vehicle on the trajectory.

These components are typically designed under different principles.
For example, the perception component often relies on object classification that is based on machine learning (ML) algorithms such as convolutional neural networks to discriminate objects
between different classes.
These ML-based algorithms are often evaluated based on the performance measures such as accuracy, precision, and recall \cite{geron2019hands,wang2019consistent}.

On the other hand, formal methods have been employed to construct a provably correct controller given a system model and temporal logic specifications \cite{kress2009temporal,kloetzer2008fully,lahijanian2009probabilistic,raman2014model,wongpiromsarn2012receding}. 
The correctness guarantee, as typically specified using a temporal logic formula,
heavily relies on the assumption that the input (i.e., the
perceived world reported by the perception component) is perfect.
For example, if the perception component only reports the most likely class of each object,
the control component assumes that the reported class is correct.
Unfortunately, this assumption may not hold in most real-world systems. 

\begin{figure}
\centering
\includegraphics[width=0.9\linewidth]{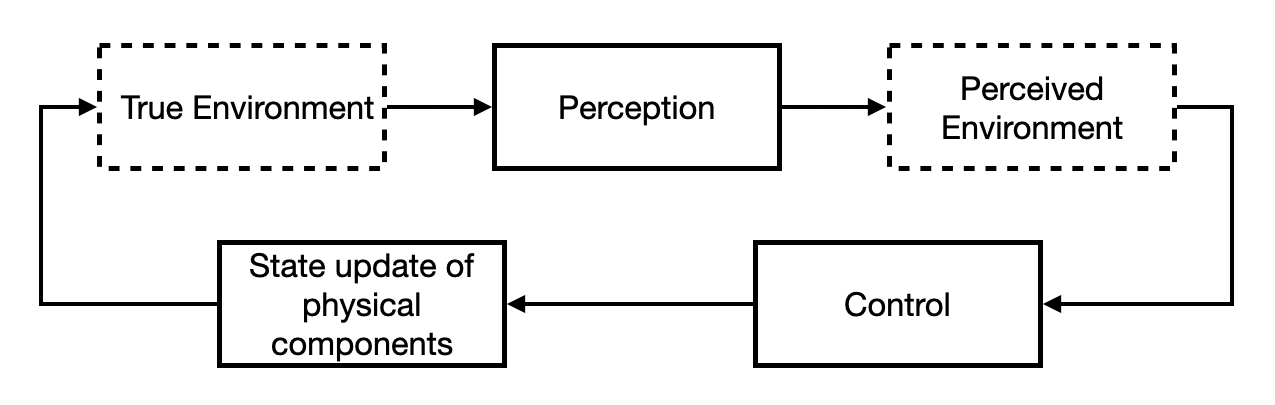}
\caption{Architecture of an autonomous system}
\label{fig:arch}
\end{figure}

In recent years, verifying neural networks with respect to safety and robustness properties has grown into an active research area~\cite{katz2017reluplex,fazlyab2019probabilistic,fazlyab2020safety,tran2020nnv}. Often, these methods apply to specific neural net architectures, such as those with piece-wise linear activation functions~\cite{katz2017reluplex}, or might require knowledge of the safe set in the output space of the neural network~\cite{fazlyab2019probabilistic,fazlyab2020safety}. Furthermore, these methods have been demonstrated on learning-based controllers with smaller input dimensions, and are not yet deployed for analysis of perception models. One reason for this is the difficulty in formally characterizing properties on ML-based perception models, as elaborated below. 

First, recent work demonstrates that it is not realistically feasible to formally specify properties reflecting human-level perception for perception models, in particular, classification ML models, due to the high dimensional nature of the input, such as pixels in an image~\cite{dreossi2018semantic}. Furthermore, Dreossi \textit{et al.} reason that not all misclassifications are the same; some are more likely to result in system-level failure, and therefore, it is necessary to adopt system-level specifications and contextual semantics in developing a framework for quantitative analysis and verification of perception models~\cite{dreossi2018semantic,seshia2018formal}. This has led to work on compositional analysis of perception models in finding system-level counter-examples~\cite{dreossi2019compositional}.

While there is work on evaluating performance of perception with temporal logic, those formal specifications are defined over image data streams, and must be manually formalized for each scenario / data stream~\cite{dokhanchi2018evaluating,balakrishnan2019specifying}. Often, there is high variability in the performance of perception models in seemingly similar environments, such as variations in sun angle~\cite{bauchwitz2020evaluating}. Thus, in a given scenario, manually constructed formal specifications might not exhaustively specify all of the desired properties for perception.

On the other hand, it is simpler to define system-level specifications, such as ``maintain a safe distance of \(5\) m from obstacles"~\cite{kress2008courteous,kress2008automatically,wongpiromsarn2011synthesis,dreossi2018semantic}. In this paper, we exploit the confusion matrix, a well-known performance measure in machine learning for classification tasks, to make system-level verification more practical. 
Moreover, various performance metrics such as accuracy, precision, or recall, can be derived from the confusion matrix, and perception algorithms are typically optimized to improve along these measures of performance~\cite{wang2019consistent}. Confusion matrices are used to characterize a model's performance in binary classification`\cite{yan2018binary}, multilabel classification~\cite{koyejo2015consistent}, and multiclass classification~\cite{narasimhan2015consistent}. Yet, the connection between these metrics and satisfaction of overall system specifications is not well understood. Furthermore, the precision-recall tradeoff is well-known in machine learning~\cite{geron2019hands}, that is, increasing precision typically results in reducing recall and vice-versa. To our knowledge, there is no systematic way of picking the right operating point of these competing objectives that accounts for system-level performance. 

The main contributions of this work are two-fold. First, we formally characterize the probability of satisfaction of a temporal logic formula over a trace given classification errors in perception. Secondly, we present an algorithm to construct a Markov chain representing the state evolution of the system, taking into account both the perception and control components. We prove that the satisfaction probability defined earlier is consistent with the probability of the temporal logic formula computed over the constructed Markov chain. We then employ existing probabilistic model checkers to compute the probability that the Markov chain satisfies the temporal logic formula. We also present empirical results on how different classification metrics affect the satisfaction probability of system-level specifications. 


    %

This paper is outlined as follows. In Section~\ref{sec:prelim}, we provide a description of the class of autonomous systems considered in our analysis, a running example of such a system and the corresponding specifications, and provide formal definitions for some performance measures of classification. In Section~\ref{sec:Problem_statement}, we present the problem statement and define the satisfaction probability. In Section~\ref{sec:Method}, we present an algorithm and a short proof describing how the confusion matrix can be used to determine satisfaction probability. In Section~\ref{sec:Example}, we present and discuss the main results of our analysis. Finally, we present our conclusions and directions for future work in Section~\ref{sec:conclusion}.

\section{Preliminaries}
\label{sec:prelim}

\subsection{System Description}
We consider a system comprising of an autonomous agent and its environment. Broadly, the autonomous agent is composed of two modules --- perception and control.
Typically, the main purpose of the perception module is object detection, which includes object localization and classification. (A more sophisticated perception component may include behavior prediction to better handle dynamic environments.)
In this paper, we focus our analysis on static environments, and leave dynamic environments for future work. Furthermore, the analysis focuses on the object classification inaccuracies. 
In several autonomy applications, the perception modules are neural network based. For the scope of this work, we treat the perception module as a black box that observes the true state \(x_e\) of an object in the environment and returns a classified label \(y_{e,t}\) of that object as the observed state of the environment at time step \(t\) to the control module.
\par
The control module receives the environment observation from the perception module to update the state \(s_t\) of the agent. Broadly, the control module is responsible for high-level (mission) planning, motion planning, and trajectory tracking. In this work, we assume that the controller takes the observed state \(y_{e,t}\) of the environment returned by the perception module as ground truth, and takes an action to update the state of the agent accordingly. Additionally, we focus on a discrete-state description of the system.
\par
The state of the system is the comprises of the state \(s_t\) of the autonomous agent and true state \(x_e\) of the environment. It is possible to formally specify high-level system requirements, and using this system description, we can ask the following question: \textit{Can we use performance metrics for classification-based perception to reason about the probability with which the high-level system specification will be satisfied?}

\subsection{Example} Here we introduce a specific example corresponding to the above system description that we will use throughout the paper. 
\begin{ex}
\label{ex:system}
Consider an autonomous car driving on a road as the agent and a sidewalk environment. We assume a static environment that comprises of the type of object on the sidewalk. The true state of the environment \(x_e\) can be one of three types --- a pedestrian denoted by ped, some other object that is not a pedestrian, denoted by obj, and the case where there is nothing on the sidewalk, denoted by empty. As illustrated in Figure~\ref{fig:Pedestrian}, the location of the sidewalk is fixed at step \(k\), for some \(k\) far from the starting position of the car such to allow for realizable correct-by-construction controllers for formal specifications that will be detailed in Example~\ref{ex:spec}. The car controller takes action to update its state at each discrete time step based on the observed state of the environment at that time step. The observed state of the environment, \(y_{e,t}\), is the class label that the perception module assigns to the object constituting the true environment at time step \(t\). We use the term system to mean the car and sidewalk environment combined.

\begin{figure}[h]
\centering
\includegraphics[scale=0.3]{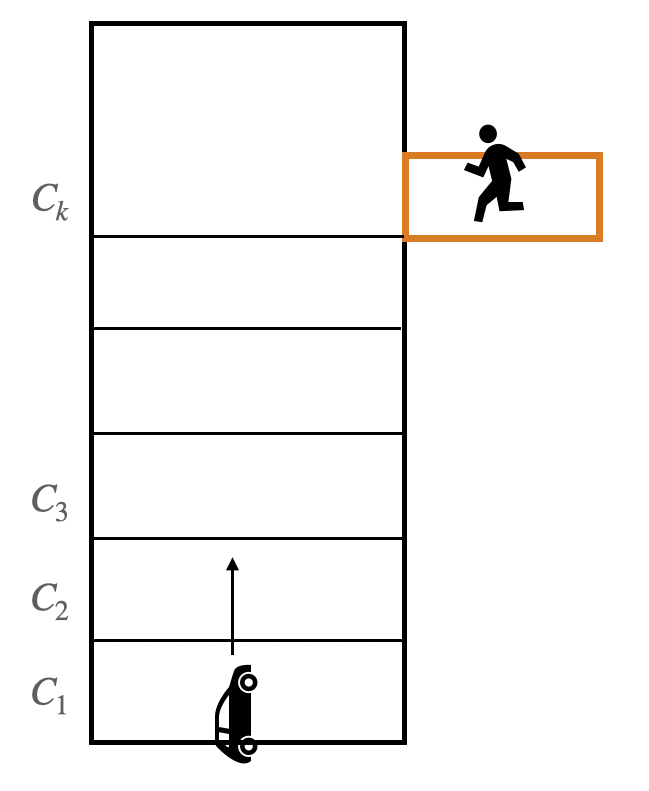}
\caption{Illustration of the car and pedestrian example}
\label{fig:Pedestrian}
\end{figure}
\end{ex}

\subsection{System Specification}
\label{subsec:intro_example}
We specify formal requirements on the system in Linear Temporal Logic (LTL). We introduce some preliminary notation before describing LTL; a more formal background on LTL can be found in~\cite{baier2008principles}. The agent is defined by variables \(V_A\), and the environment is defined by variables \(V_E\). The set of states of the agent is represented by the valuation of \(V_A\) and is denoted by \(S_A\), and the set of states of the environment is represented by the valuation of \(V_E\), denoted by \(S_E\). The set of states by the overall system is represented by \(S := S_A\times S_E\). Let \(AP\) denote the finite set of atomic propositions over the variables \(V_A\) and \(V_E\) of the agent and the environment. 

An LTL formula is defined by (a) a set of atomic propositions, (b) logical operators such as: negation (\(\neg\)), conjunction (\(\wedge\)), disjunction \((\vee)\), and implication (\(\implies\)), and (c) temporal operators such as: next (\(\bigcirc\)), eventually (\(\diamond\)), always (\(\square\)), and until (\(\mathcal{U}\)). The syntax of LTL is defined inductively as follows: (a) An atomic proposition \(p\) is an LTL formula, and (b) if \(\varphi\) and \(\psi\) are LTL formulae, then \(\neg \varphi\), \(\varphi \vee \psi\), \(\bigcirc \varphi\), \(\varphi \,\mathcal{U}\, \psi\) are also LTL formulae. LTL formulae with other temporal operators and combinations of logical connectives can be derived from these operators.

For an infinite trace \(\sigma = s_0s_1\ldots \in 2^{AP}\), an LTL formula \(\varphi\) defined over \(AP\), we use \(\sigma \models \varphi\) to mean that \(\sigma\) satisfies \(\varphi\). 
For example, the formula \(\varphi = \square p\) means that the atomic proposition \(p \in AP\) is satisfied at every state in the trace, i.e., \(\sigma \models \varphi\) if and only if \(p \in s_t, \forall t\).

\begin{ex}
\label{ex:spec}
We revisit Example~\ref{ex:system} to describe the discrete-state model of the car dynamics and its specifications. The state of the car is characterized by its position and speed, \(s_a := (x_c,v_c) \in S_A\). The car position is defined by the discrete cell it occupies, \(x_c = C_i\), where \(1 \leq i\leq N\) and \(N\) is the last cell index, and its forward speed, \(0 \leq v_c \leq V_{max}\). The perception module on the car can observe the sidewalk cell adjacent to \(C_k\) from any position on the road. The cell \(C_{k-1}\) is one road cell prior to the \(C_k\), which adjoins the sidewalk.
The overall system specifications include the states of the agent and the environment as,
\begin{enumerate}[(S1)]
     \item If the true state of the environment does not have a pedestrian, i.e, 
    \(x_e \neq ped\), then the car must not stop at \(C_{k-1}\).
     \item If \(x_e = ped\), the car must stop on \(C_{k-1}\).
     \item The agent should not stop at any cell \(C_i\), for all \(i \in \{1,\ldots, k-2\}\).\medskip
\end{enumerate}

Specifications (S\(1\)) and (S\(2\)) require the agent to stop at \(C_{k-1}\) only when there is a pedestrian on the sidewalk but not stop otherwise. We formally express the safety specifications on for the agent as follows.
\begin{equation}
\varphi_{1}= \square((x_e = ped)\vee \neg (x_c = C_{k-1} \wedge v_c = 0))\, , 
\label{eq:spec1}
\end{equation}
\begin{multline}
\varphi_{2}  = \square(\neg (x_e = ped)\vee \neg (x_c = C_{k-1}\vee \ldots \vee x_c = C_{N})\\ \vee (x_c = C_{k-1} \wedge v_c = 0))\, ,
\label{eq:spec2}
\end{multline}
\begin{equation}
    \varphi_3 = \square(\neg(\bigvee_{i = 1}^{k-1} (x_c = C_i \wedge v_c = 0)).
    \label{eq:spec3}
\end{equation} 
 The specifications (S\(1\)), (S\(2\)), and (S\(3\)) correspond to formulae $\varphi_1$, $\varphi_2$ and $\varphi_3$, respectively. \end{ex}

\subsection{Performance Measures for Classification in Perception}
For multi-class classification, we assume that statistical average performance is given in the form of a confusion matrix defined below. 
Let dataset \(\mathcal{D}\) be used to evaluate the perception module for \(n\)-class classification with classes \(c_1, \ldots, c_n\). Let \(\mathcal{D}_1, \ldots, \mathcal{D}_n\subset \mathcal{D}\) be a class-based partition of \(\mathcal{D}\) such that all datapoints \(x\in \mathcal{D}_i\) are labeled as \(c_i\) for all \(1\leq i\leq n\). For any datapoint \(x \in \mathcal{D}\), we write \(P(x) = c_i\) to denote that the predicted class of point \(x\) is \(c_i\), and we write \(T(x) = c_i\) to denote that the true class of \(x\) is \(c_i\). 

\begin{defin}[Confusion Matrix~\cite{geron2019hands}]
The \emph{confusion matrix} is a matrix \(C \in \mathbb{B}^{n\times n}\) such that for all \(i,j \in \{1,\ldots,n\}\),
\label{def:conf_matrix}
\end{defin}
\medskip
\begin{equation}
C(i,j) := \frac{\sum_{x \in \mathcal{D}_j} \mathbf{1}_{P(x) = c_i}}{|\mathcal{D}_j|},
\end{equation}
where \(\mathbf{1}_{P(x) = c_i}\) is the indicator function.
That is, for two classes, \(c_i\) and \(c_j\), \(C(i,j)\) represents the probability that a datapoint is classified as \(c_i\), given that its true class label is \(c_j\). 
For binary classification, with class labels \(c_1\) and \(c_2\), without loss of generality, \(C(1,1)\) represents the true positive rate, \(C(2,1)\) is the false negative rate, \(C(1,2)\) is the false positive rate, and \(C(2,2)\) is the true negative rate. 
Now, we give definitions for two simple measures of performance that can be derived from the confusion matrix. Without loss of generality, assume that the \(i^{th}\) class label is of interest.
\begin{defin}[Precision~\cite{geron2019hands}]
 Given the confusion matrix \(C\) for a multi-class classification, the \emph{precision} corresponding to class label \(c_i\) is
\label{def:precision}
\end{defin}
\begin{equation}
    P(i) = \frac{C(i,i)}{C(i,i) + \frac{\sum_{j\neq i}C(i,j)|\mathcal{D}_j|}{\sum_{j\neq i} |\mathcal{D}_j|}},
\label{eq:precision}
\end{equation}
where \(\frac{\sum_{j\neq i}C(i,j)|\mathcal{D}_j|}{\sum_{j\neq i} |\mathcal{D}_j|}\) is the false positive rate for class label \(c_i\), and \(C(i,i)\) is the true positive rate for class label \(c_i\). 
\begin{defin}[Recall~\cite{geron2019hands}]
Given the confusion matrix \(C\) for a multi-class classification, the \emph{recall} corresponding to class label \(c_i\) is as follows,
\label{def:recall}
\end{defin}
\begin{equation}
    R(i) = \frac{C(i,i)}{C(i,i) + \sum_{j\neq i}C(j,i)},
\label{eq:recall}
\end{equation}
where \(\sum_{j\neq i}C(j,i)\) is the false negative rate for class label \(c_i\).

Maximizing precision typically corresponds to minimizing false positives while maximizing recall corresponds to decreasing false negatives. However, there is an inherent trade-off in minimizing both false positives and false negatives for classification tasks~\cite{geron2019hands}, and often, a good operating point is found in an \textit{ad-hoc} manner. Typically, safety-critical systems are designed for optimizing recall, but as we will show, this is not always the best strategy to satisfy formal requirements.

\section{Problem Statement}
\label{sec:Problem_statement}
Here, we introduce and define the probability of satisfaction of an LTL formula starting from an initial state, given the true state of the environment. 
\begin{defin}[Transition Probability] Let \(s_1 = (s_{1,a},x_e)\), \(s_2 = (s_{2,a},x_e) \in S\) be two states of the system, \(x_e\) be the true class label of the environment, and \(C\) be the known confusion matrix associated with the agent's perception model. Let \(O(s_1, s_2)\) denote the set of environment observations \(y_e \in V_E\) that result in the agent controller transitioning from \(s_{1,a}\) to \(s_{2,a}\). The transition probability \(Pr: S \times S \rightarrow [0,1]\) is defined as,
\end{defin}
\begin{equation}
Pr(s_1, s_2) := \sum\limits_{y_e \in O(s_1, s_2)} C(y_e, x_e)\,.
    \label{eq:Pr}
\end{equation}
From the definition of the confusion matrix~\ref{def:conf_matrix}, it is trivial to check that \(\sum_{y_e \in O(s_1, s_2)} C(y_e, x_e) = 1\) and therefore, the transition probability has the range \([0,1]\). 
\begin{defin}[Paths] Choose a state \(s_0 = (s_{a,0}, x_e) \in S\) for a fixed true environment state \(x_e\). A finite path starting from \(s_0\) is a finite sequence of states \(\sigma (s_0) = s_0, s_1, \ldots,s_n\) for some \(n\geq 0\) such that the probability of transition between consecutive states, \(Pr(s_i, s_{i+1}) > 0\) for all \(0 \leq i < n\) such that \(s_i = (s_{a,i}, x_e) \in S\). Similarly, an infinite path \(\sigma = s_0, s_1, \ldots\) is an infinite sequence of states such that \(Pr(s_i, s_{i+1}) > 0\) for all \(i \geq 0\). We denote the set of all paths starting from \(s_0 \in S\) by \(Paths(s_0)\), and the set of all finite paths starting from \(s_0 \in S\) by \(Paths_{fin}(s_0)\). For an LTL formula \(\varphi\) on \(AP\), \(Paths_{\varphi}(s_0) \subset Paths(s_0)\) is the set of paths \(\sigma = s_0, s_1, \ldots\) such that \(\sigma_S \models \varphi\). 
\label{def:path}
\end{defin}
\subsection{Semantics for Probability of Satisfaction}
Now, we define probability of satisfaction of a temporal logic formula with respect to a formal specification based on the following definitions derived from~\cite{baier2008principles}.
Let \(\Omega = Paths(s_0)\) represents the set of all possible outcomes, that is, the set of all paths of the agent, starting from state \(s_0\). Let \(2^{\Omega}\) denote the powerset of \(\Omega\). Then, \((\Omega, 2^{\Omega})\) forms a \(\sigma\)-algebra. For a path \(\hat{\pi} = s_0, s_1, \ldots, s_n\)\(\in Paths_{fin}(s_0)\), we define a cylinder set as follows,
\begin{equation}
    Cyl(\hat{\pi}) = \{\pi \in Paths(s_0)|\hat{\pi}\in pref(\pi)\}. 
\end{equation}
Let \(\mathcal{C}_{s_0} = \{Cyl(\hat{\pi}) | \hat{\pi} \in Paths_{fin}(s_0)\}\). 
The following result can be found in ~\cite{baier2008principles}, and can be derived from the fundamental definition of a \(\sigma\)-algebra.
\begin{lemma}
The pair \((Paths(s_0), \,2^{\mathcal{C}_{s_0}})\) forms a \(\sigma\)-algebra, and is the smallest \(\sigma\)-algebra containing \(\mathcal{C}_{s_0}\).
\end{lemma}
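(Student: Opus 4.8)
The plan is to treat \(2^{\mathcal{C}_{s_0}}\) as the concrete collection of events at hand and to establish the two assertions separately: that this collection satisfies the three defining axioms of a \(\sigma\)-algebra over \(Paths(s_0)\), and that it coincides with the smallest \(\sigma\)-algebra containing \(\mathcal{C}_{s_0}\). Rather than merely appealing to the abstract fact that the intersection of all \(\sigma\)-algebras containing \(\mathcal{C}_{s_0}\) exists and is minimal, I would pin down this concrete collection by a two-sided inclusion: showing it is itself a \(\sigma\)-algebra containing \(\mathcal{C}_{s_0}\), and that it is contained in every \(\sigma\)-algebra containing \(\mathcal{C}_{s_0}\). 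As a preliminary, I would record the structural fact that \(Paths_{fin}(s_0)\) is countable: since the state space \(S\) is finite and only transitions with \(Pr(s_i,s_{i+1})>0\) are admissible, the finite paths form a finitely branching tree rooted at \(s_0\), so \(\mathcal{C}_{s_0}\) is a countable family and the countable-union axiom is the operative one.

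For the first assertion I would verify the three axioms directly on the concrete collection. First, the whole space lies in the collection: the length-zero path \(\hat\pi = s_0\) gives \(Cyl(s_0) = Paths(s_0)\), since every path in \(Paths(s_0)\) has \(s_0\) as a prefix. Second, closure under countable union holds by the way the collection is built from the cylinders. Third, closure under complement is the substantive axiom: I would first show that the complement of a single cylinder \(Cyl(\hat\pi)\) again belongs to the collection, by partitioning \(Paths(s_0)\) according to the first index at which a path deviates from \(\hat\pi\). This exhibits \(Paths(s_0)\setminus Cyl(\hat\pi)\) as an at-most-countable disjoint union of cylinders over the finite paths that branch off the prefix \(\hat\pi\), which is where finite branching is used; I would then lift this to arbitrary elements.

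For minimality I would argue in two steps. First, \(\mathcal{C}_{s_0}\subseteq 2^{\mathcal{C}_{s_0}}\), since each \(Cyl(\hat\pi)\) is itself a member; combined with the first assertion this makes \(2^{\mathcal{C}_{s_0}}\) a \(\sigma\)-algebra containing \(\mathcal{C}_{s_0}\). Second, for an arbitrary \(\sigma\)-algebra \(\mathcal{F}\) over \(Paths(s_0)\) with \(\mathcal{C}_{s_0}\subseteq\mathcal{F}\), I would show \(2^{\mathcal{C}_{s_0}}\subseteq\mathcal{F}\): every element of \(2^{\mathcal{C}_{s_0}}\) is obtained from members of \(\mathcal{C}_{s_0}\) by complementation and countable union, and \(\mathcal{F}\), being a \(\sigma\)-algebra containing \(\mathcal{C}_{s_0}\), is closed under exactly those operations, so an induction on the construction of each element places it in \(\mathcal{F}\). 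The two inclusions together identify \(2^{\mathcal{C}_{s_0}}\) as \emph{the} smallest \(\sigma\)-algebra containing \(\mathcal{C}_{s_0}\), rather than only asserting that some smallest one exists.

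The main obstacle I anticipate is the closure-under-complement step and, relatedly, being precise about what the concretely written collection \(2^{\mathcal{C}_{s_0}}\) contains: it sits strictly between the finite unions of cylinders and the full powerset \(2^{Paths(s_0)}\), and its closure under complement is not a formal matter but relies on the prefix-tree structure of \(Paths(s_0)\) together with finite branching to re-express each complement as a countable union of cylinders. Handling this carefully — and confirming countability of \(\mathcal{C}_{s_0}\) so that countable operations suffice and no uncountable constructions are invoked — is the crux; the remaining axiom and the two minimality inclusions are then routine.
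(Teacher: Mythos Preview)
The paper does not actually prove this lemma; it merely cites Baier and Katoen and remarks that the result ``can be derived from the fundamental definition of a \(\sigma\)-algebra.'' In that reading, the symbol \(2^{\mathcal{C}_{s_0}}\) is simply an (unconventional) name for \(\sigma(\mathcal{C}_{s_0})\), the intersection of all \(\sigma\)-algebras on \(Paths(s_0)\) containing \(\mathcal{C}_{s_0}\), and the lemma is essentially definition-chasing: an intersection of \(\sigma\)-algebras is a \(\sigma\)-algebra, and it is minimal by construction. That abstract route, which you explicitly set aside at the outset, is all the paper intends.

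Your attempt to do more---to pin down a concrete description of the collection and verify the axioms on it---runs into a genuine obstruction at precisely the point you flag. If \(2^{\mathcal{C}_{s_0}}\) is read as the family of countable unions of cylinder sets, then closure under complement fails. You correctly observe that the complement of a \emph{single} cylinder is a countable disjoint union of cylinders, but the ``lift to arbitrary elements'' step does not go through: the complement of a countable union of cylinders is a countable intersection of such complements, and that is in general \emph{not} again a countable union of cylinders. Concretely, already for a two-state chain with all transition probabilities positive, the set of paths visiting a fixed state infinitely often lies in \(\sigma(\mathcal{C}_{s_0})\) (it is a \(G_\delta\) set) but is not open, hence not a countable union of cylinders; its complement---paths visiting that state only finitely often---is dense, so every cylinder meets it. Thus the family you are working with sits strictly between the open sets and the Borel \(\sigma\)-algebra and is not itself a \(\sigma\)-algebra. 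No finite-branching or countability argument rescues this; the only workable route is the abstract one.
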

The \(\sigma\)-algebra associated with \(s_0\) is (\(Paths(s_0), 2^{\mathcal{C}_{s_0}}\)). Then, there exists a unique probability measure \(\mathbb{P}_{s_0}\) such that 
\begin{equation}
    \mathbb{P}_{s_0}(Cyl(s_0, \ldots, s_n)) = \prod\limits_{0\leq i\leq n}Pr(s_i, s_{i+1}).
\end{equation}
\begin{defin} Consider an LTL formula \(\varphi\) over \(AP\) with the overall system starting at state \(s_0 = (s_{a,0}, x_e)\). Then, the probability that the system will satisfy the specification \(\varphi\) from the initial state \(s_0\) given the true state of the environment is,
\end{defin}
\begin{equation}
    \mathbb{P}(s_0 \models \varphi) := \sum\limits_{\sigma(s_0) \in \mathcal{S(\varphi)}}\mathbb{P}_{s_0}(Cyl(\sigma(s_0))),
\end{equation}
where \(\mathcal{S(\varphi)} := Paths_{fin}(s_0) \cap Paths_{\varphi}(s_0)\). Note that \(\mathcal{S(\varphi)}\) need not be a finite set, but has to be countable.

\subsection{Problem Formulation}
\begin{problem} Given a confusion matrix \(C\) for multi-class classification, a controller \(K\), a temporal logic formula \(\varphi\), the initial state of the agent \(s_{a,0}\), and the true state of the static environment \(x_{e}\), compute the probability \(\mathbb{P}(s_0 \models \varphi\)) that \(\varphi\) will be satisfied for a system trace \(\sigma\) starting from initial condition \(s_0 = (s_{a,0}, x_e)\)?
\label{prob}
\end{problem}

\begin{remark}If the probability distribution over the true state of the environment is known, we can compute the probability that the overall system satisfies the specification.
\end{remark}

\section{Confusion Matrices To Markov Chains}
\label{sec:Method}
Our approach to solving Problem~\ref{prob} is based on constructing a Markov chain that 
represents the state evolution of the agent, taking into account the interaction of the perception and the control components. This Markov chain is constructed for a particular true state of the environment. Given a Markov chain for the state evolution of the system, it is then straightforward to compute the probability of satisfying a temporal logic formula on the Markov chain from an arbitrary initial state~\cite{baier2008principles}. Probabilistic model checking can be used to compute the probability that the Markov chain satisfies the formula using existing tools such as PRISM~\cite{kwiatkowska2011prism} and Storm~\cite{dehnert2017storm}, which have been demonstrated to successfully analyze systems modeled by Markov chains with billions of states.

\begin{defin}[Markov Chain~\cite{baier2008principles}]
A discrete-time Markov chain is a tuple \(\mathcal{M} = (S, \mathbf{P}, \iota_{init}, AP, L)\), where \(S\) is a non-empty, countable set of states, \(\mathbb{P}:S\times S\rightarrow [0,1]\) is the \emph{transition probability function} such that for all states \(s \in S\), \(\Sigma_{s'\in S}\mathbf{P}(s,s') = 1\), \(\iota_{init}:S \rightarrow [0,1]\) is the initial distribution such that \(\Sigma_{s\in S}\iota_{init}(s) = 1\), \(AP\) is a set of atomic propositions, and \(L:S\rightarrow 2^{AP}\) is a labeling function. 
\end{defin}

\begin{algorithm}
\caption{Constructing Markov Chain}\label{alg1} \begin{algorithmic}[1] \Procedure{MC}{$S,K,O,C,x_e$}
\State \(\mM(s, s') = 0,\, \forall s,s'\in S\) \Comment{Initialize}
\For{$s_o \in S$}\Comment{Looping through states}
\State $\iota_{init}(s_0) = 1$ \Comment{Setting initial state probability}
\For{$y_e \in O$} \Comment{Possible observations of $x_e$}
      \State $s_f\leftarrow K(s_o, y_e)$ \Comment{Controller state update}
      \State $p\leftarrow C(y_e,x_e)$\Comment{From confusion matrix}
      \State $\mM (s_o, s_f) \leftarrow \mM (s_o, s_f) + p$ \Comment{Update}
   \EndFor\label{obs_loop}
\EndFor\label{state_loop}
   \State \textbf{return} $\mM$
\EndProcedure
\end{algorithmic}
\label{alg:construct_MC}
\end{algorithm}
\begin{remark}
While Algorithm~\ref{alg:construct_MC} constructs the overall system Markov chain for a deterministic controller, this construction can be easily extended to the case of a probabilistic controller by modifying lines 6--8 in Algorithm~\ref{alg:construct_MC} by including the probabilistic transitions of the controller.
\end{remark}
The \(\sigma\)-algebra of Markov chain \(\mathcal{M}\) is (\(Paths(\mathcal{M}, 2^{\mathcal{C}_\mathcal{M}})\)), where \(\mathcal{C}_{\mathcal{M}} = \{Cyl(\hat{\pi})|\hat{\pi}\in Paths_{fin}(\mM)\}\)~\cite{baier2008principles}. Let \(\mS_{\mM}(\varphi)\) denote all paths of the MC \(\mM\) in \(Paths_{fin}(\mM)\cap Paths(\mM)\).
\begin{defin}[Probability on a Markov Chain]
Given an LTL formula \(\varphi\) over \(AP\), a true state of the environment, \(x_e\), an initial system state, \(s_0 = (s_{a,0}, x_e)\), and a Markov chain \(\mathcal{M}\) describing the dynamics of the overall system, we denote the probability that the system will satisfy \(\varphi\) starting from state \(s_0\) as \(\mathbb{P}_{\mathcal{M}}(s_0 \models \varphi_s)\). This probability can be computed using standard techniques as described in~\cite{baier2008principles}.
\end{defin}
Note that the construction of \(\mM\) depends on the true environment state \(x_e\). 
\begin{proposition}
Given \(\varphi\) as a temporal logic formula over the agent and the environment states, true state of the environment \(x_e\), agent initial state \(s_{a,0}\), and a Markov chain \(\mathcal{M}\) constructed via Algorithm~\ref{alg1}, then \(\mathbb{P}(s_0\models \varphi)\) is equivalent to computing \(\mathbb{P}_{\mathcal{M}}(s_0 \models \varphi)\), where \(s_0 = (s_{a,0}, x_e)\).
\end{proposition}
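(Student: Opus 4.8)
The plan is to show that the two probability spaces — the one built by hand in Section~\ref{sec:Problem_statement} from the transition function $Pr$, and the one associated with the Markov chain $\mathcal{M}$ returned by Algorithm~\ref{alg1} — are in fact the \emph{same} probability space, so that $\mathbb{P}(s_0 \models \varphi)$ and $\mathbb{P}_{\mathcal{M}}(s_0 \models \varphi)$ are literally evaluating the same quantity. The first, most concrete step is to verify that the transition probability function $\mathbf{P}$ of $\mathcal{M}$ coincides with $Pr$ from the Transition Probability definition. Tracing Algorithm~\ref{alg1}: for a fixed source state $s_o$, the inner loop ranges over every observation $y_e \in O$, sends $s_o$ to $s_f = K(s_o, y_e)$, and increments $\mathcal{M}(s_o, s_f)$ by $C(y_e, x_e)$. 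Grouping the observations by their image under $K(s_o, \cdot)$, the total mass deposited on a target $s'$ is $\sum_{y_e \in O(s_o, s')} C(y_e, x_e)$, which is exactly $Pr(s_o, s')$ by~\eqref{eq:Pr}. One also checks $\sum_{s'} \mathbf{P}(s_o, s') = \sum_{y_e \in O} C(y_e, x_e) = 1$ using that the columns of the confusion matrix sum to one, so $\mathcal{M}$ is a well-defined Markov chain, and that $\iota_{init}$ set on line~4 matches the initial state $s_0$ used in Section~\ref{sec:Problem_statement}.

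Next I would observe that, because $\mathbf{P} = Pr$, the two notions of ``path'' coincide: a sequence $s_0 s_1 \ldots$ has all consecutive transitions of positive probability under $\mathcal{M}$ if and only if it lies in $Paths(s_0)$ in the sense of Definition~\ref{def:path}. Hence $Paths(\mathcal{M})$ with initial state $s_0$ equals $Paths(s_0)$, the finite paths agree, and the generating families of cylinder sets $\mathcal{C}_{\mathcal{M}}$ and $\mathcal{C}_{s_0}$ are identical. By Lemma~1 the $\sigma$-algebras $2^{\mathcal{C}_{\mathcal{M}}}$ and $2^{\mathcal{C}_{s_0}}$ therefore coincide, and both the hand-constructed measure $\mathbb{P}_{s_0}$ and the standard Markov-chain measure assign to a cylinder $Cyl(s_0, \ldots, s_n)$ the product $\prod_{0 \le i < n} Pr(s_i, s_{i+1})$. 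Since a probability measure on this $\sigma$-algebra is \emph{uniquely} determined by its values on the cylinder sets (the uniqueness clause invoked just before the definition of $\mathbb{P}_{s_0}$, i.e.\ Carath\'eodory extension), the two measures agree on all of $2^{\mathcal{C}_{s_0}}$.

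It then remains to connect the measures to the two satisfaction probabilities. On the Markov-chain side, $\mathbb{P}_{\mathcal{M}}(s_0 \models \varphi)$ is by definition $\mathbb{P}_{s_0}(\{\sigma \in Paths(s_0) : \sigma \models \varphi\}) = \mathbb{P}_{s_0}(Paths_\varphi(s_0))$, which is legitimate because the set of infinite paths satisfying an LTL formula is measurable in this $\sigma$-algebra~\cite{baier2008principles}. On the hand-built side, $\mathbb{P}(s_0 \models \varphi)$ is the sum $\sum_{\sigma(s_0) \in \mathcal{S}(\varphi)} \mathbb{P}_{s_0}(Cyl(\sigma(s_0)))$ over the finite satisfying paths $\mathcal{S}(\varphi) = Paths_{fin}(s_0) \cap Paths_\varphi(s_0)$; by countable additivity this equals the $\mathbb{P}_{s_0}$-measure of $\bigcup_{\sigma(s_0) \in \mathcal{S}(\varphi)} Cyl(\sigma(s_0))$, and that union is, up to a null set, $Paths_\varphi(s_0)$. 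Combining the three steps, both quantities equal $\mathbb{P}_{s_0}(Paths_\varphi(s_0))$, which is the claim.

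I expect the main obstacle to be this last step: making precise the sense in which $\bigcup_{\sigma(s_0) \in \mathcal{S}(\varphi)} Cyl(\sigma(s_0))$ recovers $Paths_\varphi(s_0)$ and that the cylinders involved are essentially disjoint, so the sum defining $\mathbb{P}(s_0 \models \varphi)$ is neither an over- nor an undercount. In the class of problems considered here the satisfying behavior is witnessed by finite prefixes (the car reaching and stopping at $C_{k-1}$, or violating a safety clause at some finite time), so $Paths_\varphi(s_0)$ is a countable disjoint union of basic cylinders and the identity is clean; I would state this decomposition property as the operative assumption. The remaining steps — matching $\mathbf{P}$ with $Pr$ by unwinding Algorithm~\ref{alg1}, and invoking uniqueness of the cylinder-generated measure — are essentially bookkeeping.
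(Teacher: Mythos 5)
Your proposal is correct and follows essentially the same route as the paper's proof: identify $\mathbf{P}$ with $Pr$ by unwinding Algorithm~\ref{alg1}, observe that the path spaces and cylinder-set $\sigma$-algebras generated from $s_0$ coincide, and then equate the two satisfaction probabilities as sums of cylinder measures over the satisfying finite paths. You are in fact more careful than the paper on the final step — the paper silently identifies $\sum_{\sigma \in \mathcal{S}(\varphi)} \mathbb{P}_{s_0}(Cyl(\sigma))$ with the measure of $Paths_\varphi(s_0)$ without addressing disjointness of the cylinders or the prefix-witnessed structure of $\varphi$, which you correctly flag as the operative assumption.
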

\begin{proof}
We begin by considering the transition probabilities \(Pr\) and the transition probabilities on the Markov chain \(\mathbf{P}\). Since misclassification errors are the only source of non-determinism in the evolution of the agent state, by construction, we have that \(\mathbf{P}(s_i, s_j) = Pr(s_i, s_j)\) for some \(s_i, s_j \in S\).
Next, we compare the \(\sigma\)-algebra of Markov chain \(\mathcal{M}\) with the \(\sigma\)-algebra associated with state \(s_0\). By construction of the Markov chain, observe that any path \(p \in Paths(s_0)\) is also a path on the MC \(\mathcal{M}\), \(p \in Paths(\mM)\), and as a result \(\mathcal{C}_{s_0} \subset \mC_{\mM}\). Similarly, by construction, there is no finite trace on the Markov chain starting from \(s_0\), \(\sigma(s_0) \in \mathcal{S}_\mathcal{M}\) that is not in \(\mS(\varphi)\).
\par
\begin{align*}
    \mathbb{P}(s_0 \models \varphi) = & \sum\limits_{\sigma(s_0) \in \mS(\varphi)}\mathbb{P}_{s_0}(Cyl(\sigma(s_0))) \\
   =& \sum\limits_{\sigma(s_0) \in \mS(\varphi)} \prod\limits_{0\leq i < n} Pr(\sigma_i, \sigma_{i+1})\\
   =& \sum\limits_{\sigma(s_0) \in \mS(\varphi)} \prod\limits_{0\leq i < n} \mathbf{P}(\sigma_i, \sigma_{i+1}) \\
   =& \sum\limits_{\sigma(s_0) \in \mS_{\mM}(\varphi)} \prod\limits_{0\leq i < n} \mathbb{P}_{\mM}(Cyl(\sigma(s_0))) \\
   =& \mathbb{P}_{\mM}(s_0\models \varphi)
\end{align*}
\end{proof}

\section{Example}
\label{sec:Example}
\begin{figure*}[!htbp]
\centering
\begin{minipage}{.3\textwidth}
   \includegraphics[width=\linewidth,trim={0.9cm 0.0cm 0.8cm 0.0cm}]{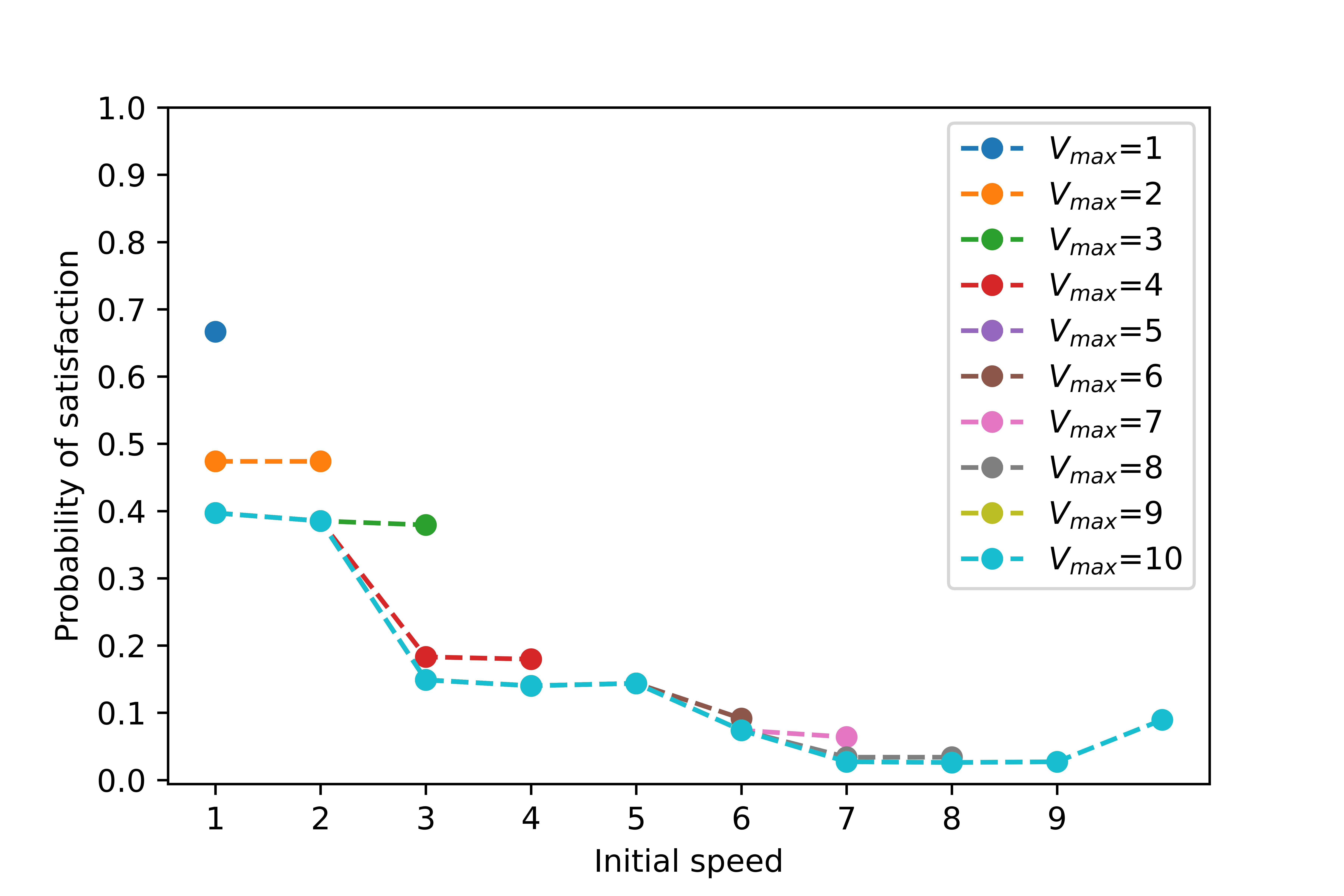}
   \subcaption{\label{fig:ped_plot} True environment: \emph{ped}}
  \end{minipage} 
  \begin{minipage}{.3\textwidth}
    \includegraphics[width=\linewidth,trim={0.9cm 0.0cm 0.8cm 0.0cm}]{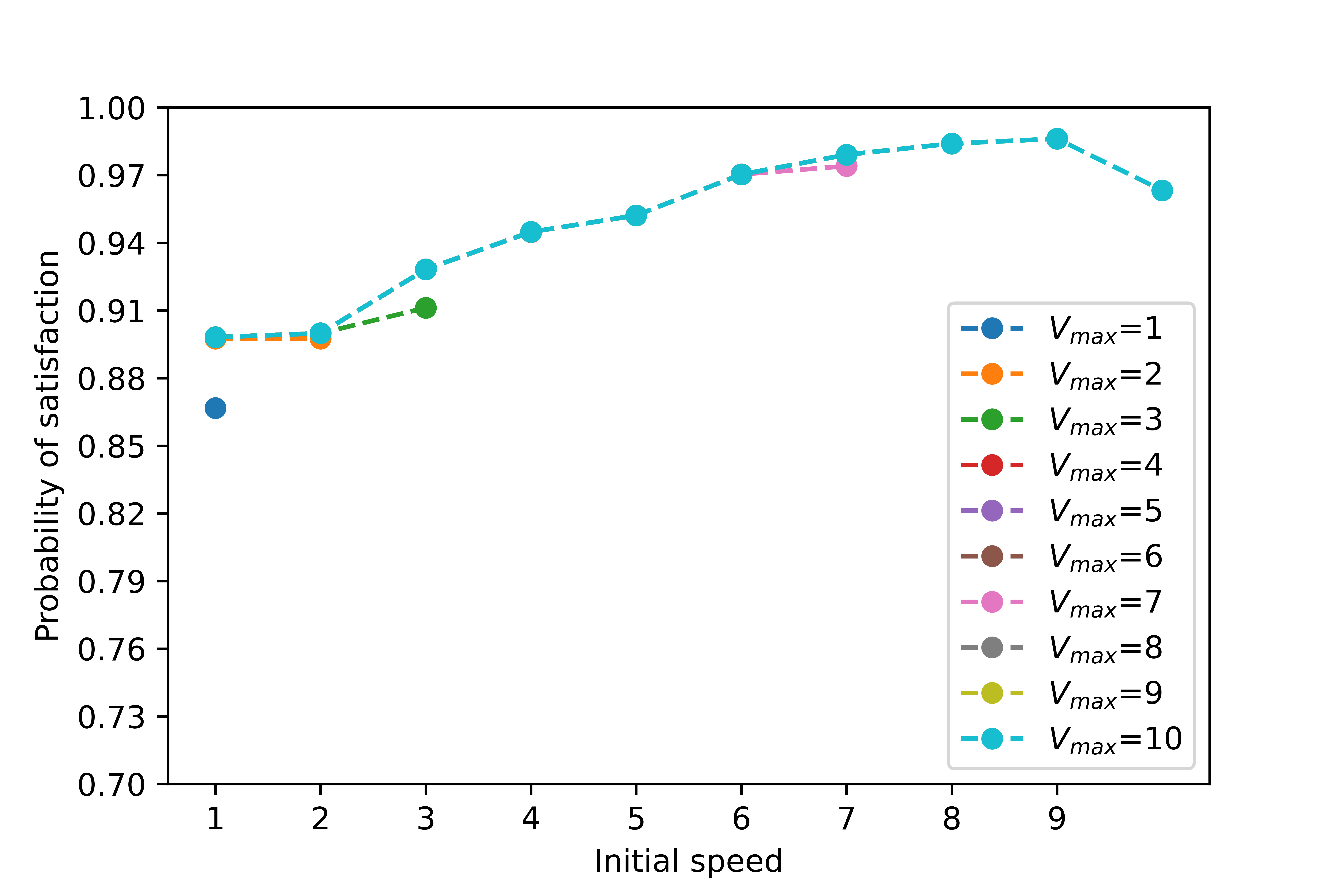}
    \subcaption{\label{fig:obj_plot} True environment: \emph{obj}}
  \end{minipage}
  \begin{minipage}{.3\textwidth}
    \includegraphics[width=\linewidth,trim={0.9cm 0.0cm 0.9cm 0.0cm}]{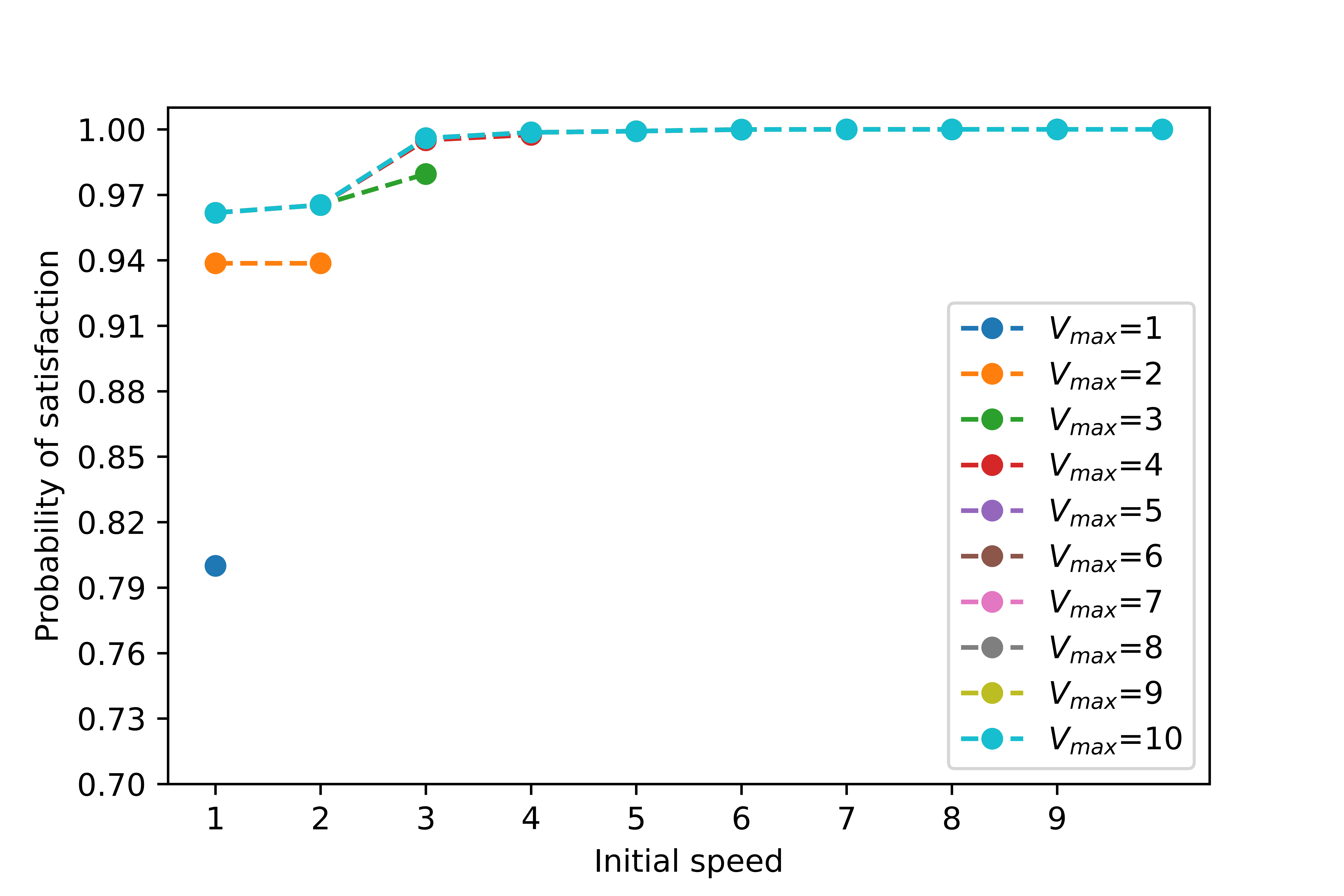}
    \subcaption{\label{fig:empty_plot} True environment: \emph{empty}}
  \end{minipage}
\caption{(a) Satisfaction probability that the car stops at \(C_{k-1}\) for \(x_e = ped\) under various initial speeds and maximum speeds \(V_{max}\) such that \(1\leq V_{max} \leq 10\). (b) Satisfaction probability that the car does not stop at \(C_{k-1}\) for \(x_e = obj\) under various initial speeds and maximum speeds \(V_{max}\) such that \(1\leq V_{max} \leq 10\). (c) Satisfaction probability that the car does not stop at \(C_{k-1}\) for \(x_e = empty\) under various initial speeds and maximum speeds \(V_{max}\) such that \(1\leq V_{max} \leq 10\).}
\label{fig:probability_plots}
\end{figure*}

\begin{figure*}[!htbp]
\centering
\begin{minipage}{.3\textwidth}
   \includegraphics[width=\linewidth,trim={0.9cm 0.0cm 0.8cm 0.0cm}]{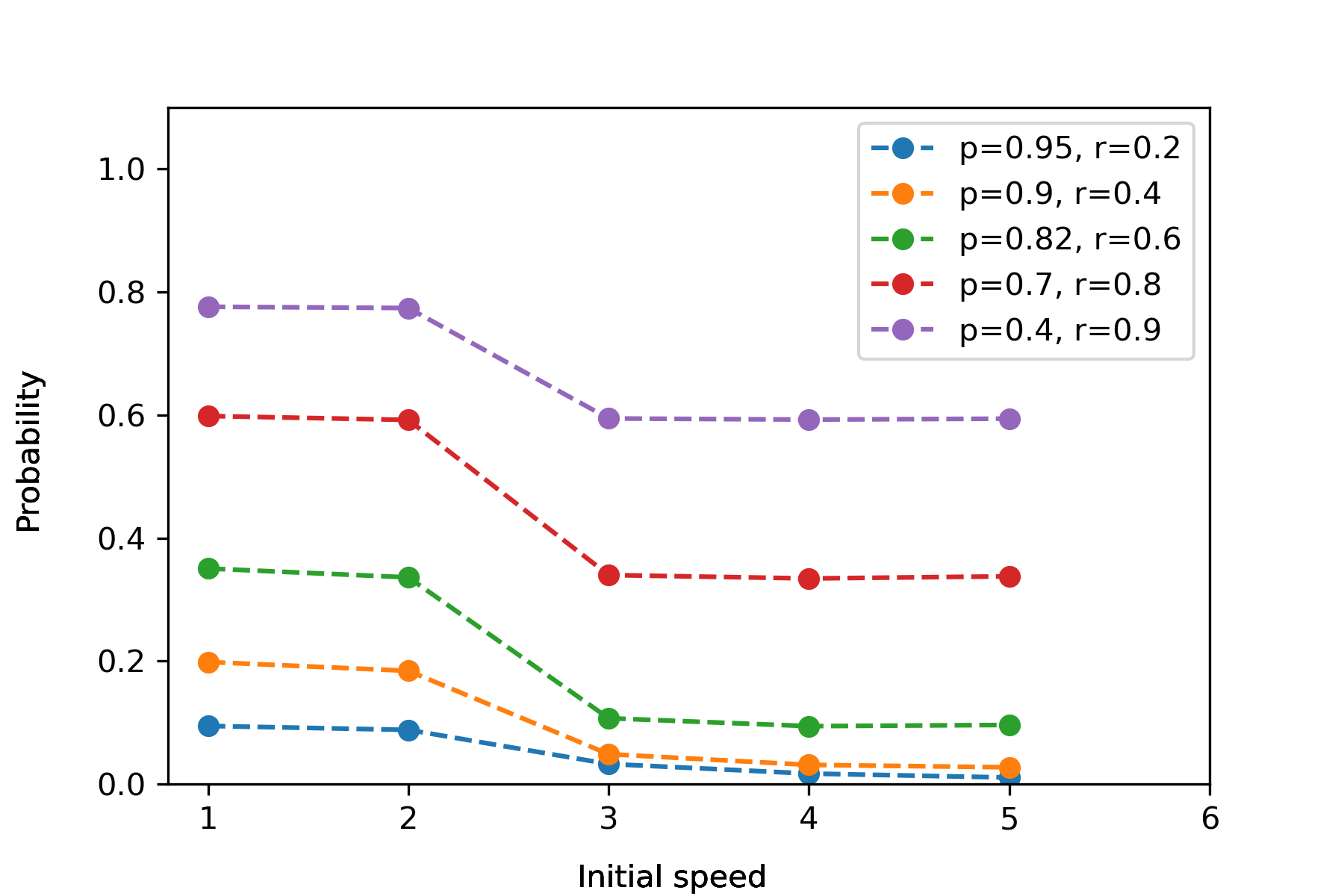}
   \subcaption{\label{fig:pr_ped} True environment: \emph{ped}}
  \end{minipage} 
  \begin{minipage}{.3\textwidth}
    \includegraphics[width=\linewidth,trim={0.9cm 0.0cm 0.8cm 0.0cm}]{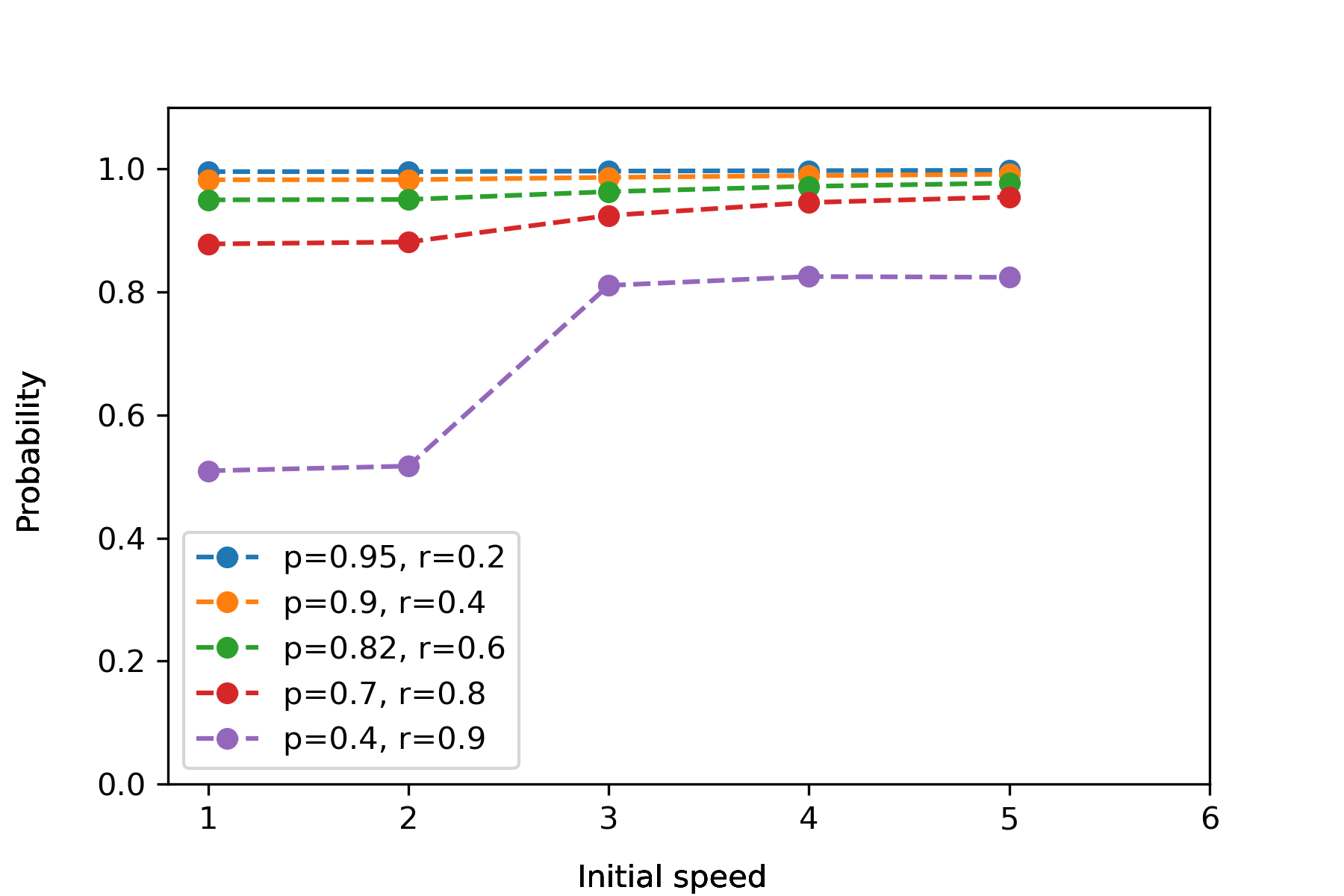}
    \subcaption{\label{fig:pr_obj} True environment: \emph{obj}}
  \end{minipage}
  \begin{minipage}{.3\textwidth}
    \includegraphics[width=\linewidth,trim={0.9cm 0.0cm 0.9cm 0.0cm}]{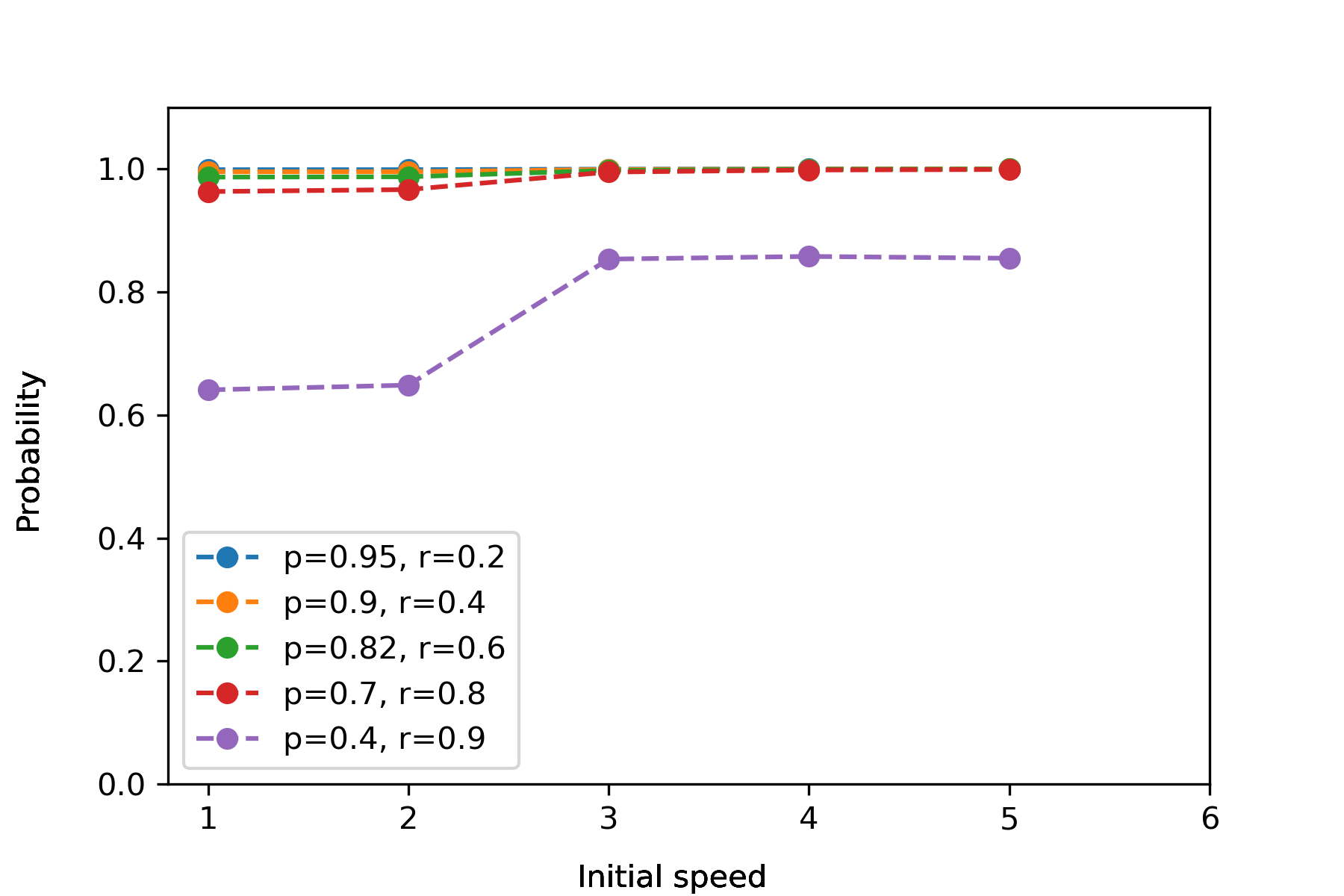}
    \subcaption{\label{fig:pr_empty} True environment: \emph{empty}}
  \end{minipage}
\caption{(a) Satisfaction probability that the car stops at \(C_{k-1}\) for \(x_e = ped\) and \(V_{max} = 5\) under various initial speeds and precision/recall pairs. (b) Satisfaction probability that the car does not stop at \(C_{k-1}\) for \(x_e = obj\) and \(V_{max} = 5\) under various initial speeds and precision/recall pairs. (c) Satisfaction probability that the car does not stop at \(C_{k-1}\) for \(x_e = empty\) and \(V_{max} = 5\) under various initial speeds and precision/recall pairs.}
\label{fig:prec_recall_figures}
\end{figure*}


\subsection{Autonomous Agent Controller}
We revisit car-sidewalk scenario described in Example~\ref{ex:system} with the corresponding  specifications listed in Example~\ref{ex:spec}, and describe the controller for the car. In this example, the observations of the environment are made at each time step as follows. Given that the true environment state is \(x_e = ped\), and at some time step \(t\), if the perception module incorrectly classifies the environment as \emph{obj}, it passes the observation \(y_{e,t} = obj\) to the control module. The car controller then takes an action at time step \(t\) based on the correct-by-construction controller corresponding to \(obj\).

If the true state of the environment has a static object \(x_e = obj\), then the agent slows down until it reaches a speed of \(v_c = 1\) and continues at that speed. If the true state of the environment is \emph{empty}, \(x_e = empty\), then the agent increases its speed until it reaches a speed of \(V_{max}\), and continues at that speed. If the true state of the environment is a pedestrian, \(x_e = ped\), we synthesize a correct-by-construction controller, using TuLiP~\cite{wongpiromsarn2011synthesis}, for the specifications \(\varphi_1\), \(\varphi_2\), and  \(\varphi_3\).
Note that our approach can be applied to any controller. We use this specific controller for illustration.
\par
Further, the controller is designed taking into account the following car dynamics. In the following equations, let \(k(i,v_c) := \max \{N, i+v_c\}\) for some \(1\leq i < N\).  
Then, the dynamics can be written as
\begin{align}
  \phantom{\square((x_c = C_i \wedge v_c = V_{max})}
  &\begin{aligned}
    \mathllap{\square((x_c = C_i \wedge v_c = 0)} & \rightarrow \bigcirc((x_c = C_i)\\
      &\qquad \wedge (v_c = 0 \vee v_c = 1))\, ,
    \label{eq:dyn1}
  \end{aligned}\\
  &\begin{aligned}
    \mathllap{\square((x_c = C_i \wedge  v_c = V_{max})} & \rightarrow \bigcirc(x_c = C_{k(i, V_{max})} \, \wedge\\
      & (v_c = V_{max} \vee v_c = V_{max}-1)))\, ,
    \label{eq:dyn2}
  \end{aligned}\\
  &\begin{aligned}
    \mathllap{\square((x_c = C_i \wedge v_c = v)} & \rightarrow \bigcirc(x_c = C_{k(i, v)} \wedge (v_c = v \, \vee \\
      & v_c = v-1 \vee v_c = v+1)))\,,
    \label{eq:dyn3}
   \end{aligned}
\end{align}
where and \(0 < v < V_{max}\) and \(C_N\) is the last road cell.

\label{subsec:control_synt}
\subsection{Results and Discussion}
\label{subsec:results}

We now present results of satisfaction probability of specifications formalized in equations~\eqref{eq:spec1}--~\eqref{eq:spec3} under varying initial conditions, confusion matrices, and true environment states. 
Once the Markov chain has been constructed, we used the model checker Storm to compute probabilities on the Markov chain~\cite{dehnert2017storm}. The implementations of Algorithm~\ref{alg1} and the results presented in this section are available online ~\cite{repo}. 
All the results in this section correspond to the setting of \(N = 65\), and the sidewalk located adjacent to cell \(C_{k}\), where \(k := 57\). The sidewalk position \(C_{57}\) was chosen so that the car has sufficient time to come to a complete stop when the initial speed is \(10\). 

\subsubsection{Initial Conditions and True Environment State}
For confusion matrix CM \(1\) as shown in Table~\ref{tab:cm_table}, the satisfaction probabilities are plotted for various true states of the environment -- \(ped\), \(obj\), and \(empty\) -- in Figures~\ref{fig:ped_plot}, ~\ref{fig:obj_plot}, and ~\ref{fig:empty_plot}, respectively. In Figure~\ref{fig:ped_plot}, the satisfaction probability indicates the probability that the car will stop at \(C_{k-1}\) since the true state of the environment contains a pedestrian. Figures~\ref{fig:obj_plot} and ~\ref{fig:empty_plot} show satisfaction probability for the car not stopping at \(C_{k-1}\) if the true state of the environment is an \(obj\) or \(empty\), respectively.

In Figures~\ref{fig:ped_plot}, the general trend is that for higher maximum speeds, the probability of satisfaction is lower; for \(V_{max} = 1\), the probability of satisfaction is the highest since the agent can bring itself to a stop in a single step according to its dynamics. Observe that for a fixed \(V_{max}\), the probability of satisfaction is monotonically decreasing in the initial speed of the agent, which is reasonable because higher initial speeds allow corresponds to a lower probability of recovering from perception errors due to dynamics of the car. The exception to this is that for \(V_{max} = 10\) and an initial speed of \(10\), the satisfaction probability is slightly higher than for an initial speed of \(9\). This small increase in probability is due to the location of the sidewalk (next to \(C_{57}\)) in relation to the initial speed of \(10\), which leads the car to a stop at the earliest in \(56\) steps according to its dynamics described in Eq~\ref{eq:dyn1}--\ref{eq:dyn3}. Thus, unless the car speed is \(v_c = 1\), any misclassifications of \(x_e\) as \(y_e = obs\) in the car's trace actually help the car in reducing its speed. Thus, the probability that the car will stop is the probability that it observes either \emph{ped} or \emph{obj} in the first \(9\) steps, and finally makes the correct observation of \(ped\) in the last step, which is equivalent to \((\frac{12}{15})^9\frac{10}{15} \approx 0.895\). 

In Figures~\ref{fig:ped_plot}, note that \(V_{max}=10\) forms a lower bound for satisfaction probability for varying initial speeds. For instance, for initial speed of \(1\), the satisfaction probability is the same for all \(V_{max} \ \geq 3\). While this might seem unusual, this observation can be explained as follows. If the initial speed is low enough compared to the maximum speed, we would need to observe \(y_e = empty\) repeatedly to increase the speed significantly due to the controller design in the car. The probability that this perception error occurs in multiple consecutive steps is small, and therefore, does not produce a noticeable difference in satisfaction probability when the initial speed is relatively small compared to a range of large maximum speeds.

In Figures~\ref{fig:obj_plot} and ~\ref{fig:empty_plot}, we observe that satisfaction probability monotonically increases with \(V_{max}\), with \(V_{max} = 10\) forming an upper bound on satisfaction probability. With a higher \(V_{max}\), the car could potentially increase its speed to higher speeds than with a lower \(V_{max}\), thus making it harder to stop. For instance in Figure~\ref{fig:obj_plot}, \(V_{max} = 1\) has the lowest satisfaction probability since the car can bring itself to a stop in one step; the probability that it will not stop at \(C_{k-1}\) is if it makes observations of \(y_e = obj\) or \(y_e = empty\), which from the confusion matrix CM \(1\) have the probability of \(\frac{13}{15} \approx 0.8667\). In Figure~\ref{fig:obj_plot}, the slight dip in probability from initial speed of \(9\) to \(10\) for \(V_{max} = 10\) can be explained using the same corresponding argument in Figure~\ref{fig:ped_plot} because controllers for \(ped\) and \(obj\) observations work to lower the speed of the car. In contrast, in Figure~\ref{fig:empty_plot}, there is no such dip in probability because the correct observation of \(y_e = empty\) works to keep the car speed higher to \(V_{max}\), and while misclassifications lead to lower the speeds, they have a small probability of occuring in several consecutive steps. Moreover, since the controller for \(empty\) works to increase the speed to \(V_{max}\), and the controller for \(obj\) works to decrease the speed until \(v_c = 1\), the probability that the car will not stop is higher in Figure~\ref{fig:empty_plot} than in Figure~\ref{fig:obj_plot}.



\subsubsection{Precision/Recall Tradeoff}
Often in autonomous driving applications, maximizing recall is prioritized over precision for safety purposes. In our example, maximizing recall would correspond with increasing tendency to stop at \(C_{k-1}\), even if \(x_e \neq ped\). In Figures~\ref{fig:prec_recall_figures}, we show how varying precision/recall affects the probability of satisfaction for \(V_{max} = 5\). These precision/recall pairs were chosen to reflect the general precision/recall tradeoff trends for classification tasks~\cite{geron2019hands}. For the results presented in this paper, we construct a confusion matrix as a function of precision (\(p\)) and recall (\(r\)) as shown in \(CM(p,r)\) of Table~\ref{tab:cm_table}. Note that these precision/recall pairs are in reference to the class label \emph{ped}.
\begin{table}[h!]
\caption{Confusion Matrices used in simulation}\label{tab:cm_table}
\setlength\tabcolsep{3.5pt} 
\begin{tabularx}{\columnwidth}{@{} Z *{6}{c} @{}}
\toprule 
Predicted & \multicolumn{3}{c}{True (CM 1)} & \multicolumn{3}{c@{}}{True (CM(p,r))}\\
\cmidrule(lr){2-4} \cmidrule(l){5-7} 
& \emph{ped}  & \emph{obj} & \emph{empty} & \emph{ped}  & \emph{obj} & \emph{empty}\\
\midrule 
\emph{ped}  &  10/15 & 2/15 & 3/15 & TP & FP/2 & FP/2   \\ 
\emph{obj} &  2/15 & 11/15 & 2/15 & FN/2 & 4TN/10 & TN/10  \\
\emph{empty} & 3/15 & 2/15 & 10/15 & FN/2 & TN/10 & 4TN/10  \\
\bottomrule 
\end{tabularx}
\end{table}
In Table~\ref{tab:cm_table}, \(TP\), \(FP\), \(TN\), \(FN\) are the number of true positives, false positives, true negatives, and false negatives, respectively, of the \(ped\) class label. These are derived from precision \(p\) and recall \(r\) as follows,
\begin{equation}
\begin{aligned}
    TP &= r\,, & FP &= TP(\frac{1}{p}-1)\,,\\
    TN &= 2-FP\,, & FN &= 1-TP\,.
\end{aligned}
\label{eq:rates}
\end{equation}
Note that this is one of many possible confusion matrices that could be constructed; we have chosen one of them for illustration, and we use it consistently across all precision/recall pairs. 

Figure~\ref{fig:pr_ped} shows the probability that the car will stop at \(C_{k-1}\) given that \(x_e = ped\) for varied precision/recall. We observe that for higher precision and lower recall, satisfaction probability is lower compared to that of lower precision and higher recall. For a fixed precision/recall setting, the satisfaction probability is monotonically decreasing with initial speed. This indicates that it is easier for the car to recover from misclassification errors for lower initial speeds, and can be reasonably explained by the car dynamics since the car can only increase or decrease its speed by 1 unit at every step. At lower initial speeds, the car needs to make several misclassification errors consecutively to continue increasing its speed, which has a low probability over the length of the trace. In contrast, at higher speeds, the car moves a greater distance in one time step, and therefore has fewer opportunities for recovering from misclassification errors.

Figures~\ref{fig:pr_obj} and ~\ref{fig:pr_empty} shows satisfaction probability when \(x_e = obj\) and \(x_e = empty\), respectively, under for different precision/recall pairs. This satisfaction probability indicates the probability that the car will not stop at \(C_{k-1}\). We observe that for lower recall and higher precision, the probability that the car will not stop is higher, and vice-versa for higher recall and lower precision. This corresponds to our intuition that for higher recall, and consequently lower precision, the car is more likely to stop even when there is no pedestrian. The satisfaction probability increases monotonically with initial speed. First, due to its dynamics, when the car of has a lower initial speed, it makes more observations at slower speeds before reaching \(C_{k-1}\) compared to when starting at higher initial speeds. Secondly, if a car starting at lower initial speed has to progressively to a higher speed \(v_c\), it needs to make several consecutive observations of \(y_e = empty\), which has a low probability as reasoned above. For these two reasons, with lower initial speeds, the car can make more misclassifications due to more observations, and the car tends to continue at a lower speed over its trace. Thus, because it takes fewer steps to bring the car to a stop if it is traveling with a low speed \(v_c\), it is easier to bring the car to a stop at \(C_{k-1}\).

Although both Figures~\ref{fig:pr_obj} and ~\ref{fig:pr_empty} represent the satisfaction probability that the car will not stop, Figure~\ref{fig:pr_empty} shows that the satisfaction probability is higher when the true environment is \emph{empty}  compared to Figure~\ref{fig:pr_obj}, where the true environment is \emph{obj}. This difference can be explained by the controllers for the \emph{obj} and \emph{empty} environments. If \(x_e = obj\), the car controller slows down the car by 1 unit unless \(v_c = 1\), and for \(x_e = empty\), the car controller speeds up the car by 1 unit unless \(v_c = V_{max}\). As reasoned above, at lower speeds, it is easier for the car to stop with a few misclassifications of the environment as \(ped\), thus leading to lower satisfaction probability in the case of \(x_e = obj\).


\section{CONCLUSION} 
\label{sec:conclusion}
In this work, we present preliminary results towards establishing synergy between performance metrics popular in classification and formal system requirements in autonomy. Specifically, we first define the probability of satisfaction of a temporal logic formula, and then present a simple algorithm that computes the satisfaction probability from a confusion matrix for classification tasks. In addition to observing that mis-classifications could lead to agent traces that violate the overall system requirements, we can also compute non-trivial probabilities of satisfaction of overall system requirements. Furthermore, we observe that due to the precision/recall tradeoff in classification algorithms, it is infeasible to satisfy all system requirements by maximizing either one of those perception performance measures. 
\par
This preliminary analysis opens several questions for future work, some of which include --- using formal requirements to characterize a class of confusion matrices that would be optimal for maximizing the satisfaction probability of those requirements, and optimizing for precision and recall in a manner that is consistent with formal system requirements. Further research is required to study performance metrics of perception tasks such as behavior prediction, such as in scenarios with dynamic and reactive environments, in the context of formal system requirements. 
\section{ACKNOWLEDGMENTS}
Apurva Badithela and Richard Murray acknowledge funding from AFOSR Test and Evaluation program, grant FA9550-19-1-0302. 

\bibliographystyle{IEEEtran}
\bibliography{references}
\end{document}